\newcommand{\approach}[1]{{$\mathsf{ #1}$}}
\DeclareMathOperator{\polylog}{polylog}
\newcommand{\tO}{\tilde{O}}
\renewcommand{\implies}{\mbox{ implies }}
\newcommand{\prob}[1]{ \Pr \left [ #1 \right ]}
\newcommand{\size}[1]{  \left | #1 \right |}
\newcommand{\COMMENTED}[1]{{}}
\newcommand{\expec}[1]{ \textup{E}\left [ #1 \right ]}
\newcommand\given[1][]{\:#1\vert\:}
\newcommand{\paren}[1]{\left ( #1 \right )}
\newcommand{\junk}[1]{}
\newcommand{\fhat}{\hat{f}}
\newcommand{\aand}{\mbox{ and }}
\newcommand{\etal}{{et al.~}}
\newcommand{\tforall}{\mbox{ for all }}
\theoremstyle{plain}
\newtheorem{theorem}{Theorem}[section]
\newtheorem{lemma}[theorem]{Lemma}
\newtheorem{corollary}[theorem]{Corollary}
\setlist{leftmargin=5.5mm}
\title{Finding Subcube Heavy Hitters in Analytics Data Streams}
\date{}
\author{Branislav Kveton\thanks{The authors are listed alphabetically.} \\ Adobe Research \\ kveton@adobe.com  \and S. Muthukrishnan \\ Rutgers University \\ muthu@cs.rutgers.edu \and Hoa T. Vu\thanks{Part of this work was done at Adobe Research.} \\ University of Massachusetts \\ hvu@cs.umass.edu \and Yikun Xian \\ Rutgers University \\ siriusxyk@gmail.com }
\begin{document}

\begin{titlepage}

\maketitle
\sloppy

\begin{abstract}

Modern data streams typically have high dimensionality. For example, digital analytics streams consist of  user online activities (e.g., web browsing activity, commercial site activity, apps and social behavior, and response to ads). An important problem is to find frequent joint values (heavy hitters) of subsets of dimensions. 

Formally, the data stream consists of $d$-dimensional items and a {\em $k$-dimensional subcube} $T$ is a subset of $k$ distinct coordinates. Given a theshold $\gamma$, a {\em subcube heavy hitter query} ${\rm Query}(T,v)$ outputs YES if  $f_T(v) \geq \gamma$ and NO if $f_T(v) < \gamma/4$ where $f_T$ is the ratio of the number of stream items whose coordinates $T$ have joint values $v$. 
The {\em all subcube heavy hitters query} ${\rm AllQuery}(T)$ outputs all joint values $v$ that return YES to ${\rm Query}(T,v)$.  The problem is to answer these queries correctly for all $T$ and $v$. 

We present a simple one-pass sampling algorithm to solve the subcube heavy hitters problem in $\tilde{O}(kd/\gamma)$ space. $\tilde{O}(\cdot)$ suppresses polylogarithmic factors. This is optimal up to polylogarithmic factors based on the lower bound of Liberty \etal \cite{LibertyMTU16} In the worst case, this bound becomes $\Theta(d^2/\gamma)$ which is prohibitive for large $d$. 

Our main contribution is to circumvent this quadratic bottleneck via a model-based approach. In particular, we assume that the dimensions are related to each other via the Naive Bayes model. 
We present a new two-pass, $\tilde{O}(d/\gamma)$-space algorithm for our problem, and a fast algorithm for answering ${\rm AllQuery}(T)$ in $\tO((k/\gamma)^2)$ time. 

We demonstrate the effectiveness of our approach on a synthetic dataset as well as real datasets from Adobe and Yandex.  Our work shows the potential of model-based approach to data streams.
\end{abstract}

\end{titlepage}

%
%
%
%

\section{Introduction}

We study the problem of finding heavy hitters in high dimensional data streams. Most companies see transactions with items sold, time, store location, price, etc. that arrive over time. Modern online companies see streams of user web activities that typically have components of user information including ID (e.g. cookies), hardware (e.g., device), software (e.g., browser, OS), and  contents such as web properties, apps. Activity streams also include events (e.g., impressions, views, clicks, purchases) and  event attributes (e.g., product id, price, geolocation, time). 
Even classical IP traffic streams have many dimensions including source and destination IP addresses, port numbers and other features of an IP connection such as application type. Furthermore, in applications such as Natural Language Processing, streams of documents can be thought of as streams of a large number of bigrams or multi-grams over word combinations~\cite{GoyalDV09}. As these examples show, analytics data streams with 100's and 1000's of dimensions arise in many applications. Motivated by this, we study the problem of finding {\em heavy hitters} on data streams focusing on $d$, the number of dimensions, as a parameter. Given $d$ one sees in practice, $d^2$ in space usage is prohibitive, for solving the heavy hitters problem on such streams.

Formally, let us start with a one-dimensional stream of items $x_1, \ldots x_m$  where each $x_i \in [n] := \{ 1,2,\ldots,n\}$. We can look at the {\em count} $c(v)=  \size{ \{ i: x_i = v \} }$ or the {\em frequency ratio} $f(v) = {c( v)}/{m}$. A {\em heavy hitter} value $v$ is one with 
$c(v) \geq \gamma m$ or equivalently $f(v) \geq \gamma$, for some constant $\gamma$. The standard {\em data stream model} is that we maintain data structures of size $\polylog(m,n)$ and determine if $v$ is a heavy hitter with probability of success at least $3/4$, that is, if $f(v) \geq \gamma$ output YES and output NO if $f(v) < \gamma/4$ for all $v$.\footnote{The gap constant 4 can be narrowed arbitrarily and the success probability can be amplified to $1-\delta$ as needed, and we omit these factors in the discussions.} We note that if $\gamma/4 \leq f(v) < \gamma$, then either answer is acceptable.
 
Detecting heavy hitters on data streams is a fundamental problem that arises in guises such as finding elephant flows and network attacks in networking, finding hot  trends in databases, finding frequent patterns in data mining,  finding largest coefficients in signal analysis, and so on.  Therefore, the heavy hitters problem has been studied extensively in theory, databases, networking and signal processing literature.  See~\cite{graham} for an early survey and ~\cite{Woodruff16} for a recent survey.
 
 \noindent
\paragraph{Subcube heavy hitter problems} 
 Our focus is on modern data streams such as in analytics cases, with $d$ dimensions, for large $d$. The data stream consists of $d$-dimensional 
 items $x_1,\cdots, x_m$. In particular,  
\[x_i  = (x_{i,1},\ldots,x_{i,d}) \mbox{ and each } x_{i,j} \in [n] ~.\]
A {\em $k$-dimensional subcube} $T$ is a subset of $k$ distinct coordinates $\{ T_1,\cdots,T_k \} \subseteq [d]$. We refer to the joint values of the coordinates $T$ of $x_i$ as $x_{i,T}$. 

The number of items whose coordinates $T$ have joint values $v$ is denoted by $c_T(v)$, i.e., $c_T(v) = \size{ \{ i: x_{i,T} = v \} }.$ Finally, we use $X_T$ to denote the random variable of the joint values of the coordinates $T$ of a random item. We  have the following relationship
\[
f_T(v) := \prob{X_T = v} = \frac{c_T(v)}{m}  ~.
\]
For a single coordinate $i$, we slightly abuse the notation by using $f_i$ and $f_{\{i\}}$ interchangeably. For example, $f_{T_i}(v)$ is the same as $f_{ \{ T_i \} }(v)$. Similarly, $X_i$ is the same as $X_{\{ i \}}$.

We are now ready to define our problems. They take $k, \gamma$ as parameters and the stream as the input and build data structures to answer:
\begin{itemize}
\item
{\em Subcube Heavy Hitter}:  ${\rm Query}(T,v)$,  where $\size{T} = k$, and $v \in [n]^k$, returns an estimate if $f_T(v) \geq \gamma$. Specifically, output YES if  $f_T(v) \geq \gamma$ and NO if $f_T(v) < \gamma/4$. If $\gamma/4 \leq f_T(v) < \gamma$, then either output is acceptable. 
The required success probability {\em for all} $k$-dimensional subcubes $T$ and $v \in [n]^k$ is at least 3/4.

\item
{\em All Subcube Heavy Hitters}:  ${\rm AllQuery}(T)$ outputs all joint values $v$ that return YES to ${\rm Query}(T,v)$. This is conditioned on the algorithm used for ${\rm Query}(T,v)$. 
\end{itemize}
It is important to emphasize that the stream is presented (in a single pass or constant passes) to the algorithm before the algorithm receives any query.

Subcube heavy hitters are relevant wherever one dimensional heavy hitters have 
found applications: combination of source and destination IP addresses forms the subcube heavy hitters that detect network attacks; combination of stores, sales quarters and nature of products forms the subcube heavy hitters that might be the pattern of interest in the data, etc. Given the omnipresence of multiple dimensions in digital analytics, arguably, subcube heavy hitters limn the significant data properties far more than the single dimensional view. 

\paragraph{Related works} 
The problem we address is directly related to frequent itemset mining studied in the data mining community. 
In frequent itemset mining, each dimension is binary ($n=2$), and we consider ${\rm Query}(T,v)$ where $v= (1,\ldots,1) := {\bf U}_k$. 
It is known that counting all maximal subcubes $T$ that have a frequent itemset, i.e., $f_T({\bf U}_k) \geq \gamma$, is $\#P$-complete \cite{Yang04}. Furthermore,  finding even a single $T$ of maximal size such that $f_T({\bf U}_k) \geq \gamma$ is NP-hard \cite{HamiltonCW06,LibertyMTU16}. 
Recently,  Liberty \etal showed that any constant-pass streaming algorithm answering ${\rm Query}(T,{\bf U}_k)$ requires 
$\Omega(kd/\gamma \cdot \log(d/k))$  bits of memory \cite{LibertyMTU16}. In the worst case, this is $\Omega(d^2/\gamma)$ for large $k$, ignoring the polylogarithmic factors. 
For this specific problem, sampling algorithms will nearly meet their lower bound for space. Our problem is more general, with arbitrary $n$ and $v$.

\paragraph{Our contributions}
Clearly, the case $k=1$ can be solved by building one of the many known single dimensional data structures for the heavy hitters problem on each of the $d$ dimension; the $k=d$ case can be thought of as a giant single dimensional problem by linearizing the space of all values in $[n]^k$; for any other $k$, there are ${d \choose k}$ distinct choices for subcube $T$, and these could be treated as separate one-dimensional problems by linearizing each of the subcubes.  In general, this entails ${d \choose k}$ and $\log (n^d)$ cost in space or time bounds over the one-dimensional case, which we seek to avoid. Also, our problem can be reduced to the binary case by unary encoding each dimension by $n$ bits, and solving frequent itemset mining:  the query then has $kn$ dimensions. The resulting bound will have an additional $n$ factor which is large. 

First, we observe that the reservoir sampling approach ~\cite{Vitter85}
solves subcube heavy hitters problems more efficiently compared to the approaches mentioned above. Our analysis shows that the space we use is within polylogarithmic factors of the lower bound shown in~\cite{LibertyMTU16} for binary dimensions and query vector ${\bf U}_k$, which is a special case of our problem. Therefore, the 
subcube heavy hitters problem can be solved using  $\tO(kd/\gamma)$ space. 
However, this is $\Omega(d^2)$ in worst case. 

Our main contribution is to avoid this quadratic bottleneck for finding subcube heavy hitters. 
We adopt the notion that there is an underlying probabilistic model behind the data, and
in the spirit of the Naive Bayes model, we assume that the dimensions are nearly (not exactly) mutually independent given an observable latent dimension. This could be considered as a low rank factorization of the dimensions.
In particular, one could formalize this assumption by bounding the total variational distance between the data's joint distribution and that derived from the
Naive Bayes formula. 
  This assumption is common  in statistical data analysis and highly prevalent in machine learning. 
Following this modeling, we make two main contributions:
\begin{itemize}
\item
We present a two-pass,  $\tO(d/\gamma)$-space streaming algorithm for answering ${\rm Query}(T,v)$.  This improves upon the $kd$ factor 
in the space complexity from sampling, without assumptions, to just $d$ with the Naive Bayes assumption, which would make this algorithm
 practical for large $k$. 
Our algorithm uses sketching in each dimension in one pass to detect heavy hitters, and then needs a second pass to precisely estimate their frequencies. 

\item
We present a fast algorithm for answering ${\rm AllQuery}(T)$ in $\tO((k/\gamma)^2)$ time.
The naive procedure would take exponential time $\Omega((1/\gamma)^k)$ by considering the Cartesian product of the heavy hitters in each dimension. Our approach, on the other hand, uses the structure of the Naive Bayes assumption to iteratively construct the subcube heavy hitters one dimension at a time.
\end{itemize}

Our work develops the direction of model-based data stream analysis. Model-based data analysis has been effective in other areas. For example, in compressed sensing,  realistic signal models that include dependencies between values and locations of the signal coefficients improve upon unconstrained cases \cite{duarte2009}. In statistics, using tree constrained models of multidimensional data sometimes improves point and density estimation.  In high dimensional distribution testing, model based approach has also been studied to overcome the curse of dimensionality \cite{DaskalakisDK16}. 

In the data stream model, \cite{IndykM08,BravermanCLMO10,BravermanO10} studied the problem of testing independence. McGregor and Vu \cite{McGregorV15} studied the problem of evaluating Bayesian Networks. In another work, Kveton \etal \cite{KvetonBGTMS16}
 assumed a tree graphical model and designed a one-pass algorithm that estimates the joint frequency; their work however only solved the $k=d$ case for the  joint frequency estimation problem.
Our model is a bit different and more importantly, we solve the subcube heavy hitters problem
(addressing all the ${d \choose k}$ subcubes) which prior work does not solve.  
In following such a direction, we have extended the fundamental heavy hitters problem to
higher dimensional data. Given that many implementations already exist for the sketches we use for one-dimensional heavy hitters as a blackbox,  our algorithms are therefore easily implementable. 

\medskip
\paragraph{Background on the Naive Bayes model and its use in our context.} The Naive Bayes Model \cite{russell10artificial} is a Bayesian network over $d$ features $X_1, \dots, X_d$ and a class variable $Y$. This model represents a joint probability distribution of the form
\begin{align*}
&  \prob{X_1 = x_1, \dots, X_d = x_d, Y = y} \\
  = & \prob{Y = y} \prod_{j = 1}^d \prob{X_j = x_j \mid Y = y}\,,
\end{align*}
which means that the values of the features are conditionally independent given the value of the class variable. 
The simplicity of the Naive Bayes model makes it a popular choice in text processing and information retrieval \cite{LewisYRL04,manning08information}, with state-of-the-art performance in spam filtering \cite{Androutsopoulos00}, text classification \cite{LewisYRL04}, and others. 

\junk{
For the discussion purpose in this paper, we focus on text classification where the {\em features} are individual words in text, which is also known as the bag-of-words model; and the {\em class label} summarizes one aspect of the text, such as its topic of the text or the content type \cite{LewisYRL04}. Therefore, the features are discrete variables with high cardinality and the class is the topic of the text. The number of topics is typically on the order of tens \cite{LewisYRL04}. Many probabilistic queries of interest are in the form of the Naive Bayes model. Given such streams of text, an example query of interest is
\begin{align*}
  \{(x_{i_1}, x_{i_2}): \prob{X_{i_1} = x_{i_1}, X_{i_2} = x_{i_2}} \geq \gamma\}
\end{align*}
all bi-grams over positions $i_1$ and $i_2$ that appear frequently in the text. Such queries can, for instance, help in understanding the generative aspect of the model by retrieving the sequences of words that are likely to be generated. In this work, we show how to answer these queries in a small space for $k$-grams of any subset of up to $d$ features.
}

\paragraph{Empirical study.}
We perform detailed experimental study of subcube heavy hitters. We use a synthetic dataset where we generate data that confirms to the Naive Bayes model. We then experiment with real data sets from Yandex (Search) and Adobe (Marketing Cloud) which give multidimensional analytics streams. We experiment with the reservoir sampling based algorithm as a benchmark that works without modeling assumptions, and our two-pass subcube heavy hitters algorithm that improves upon it for data that satisfies the model. We also adopt our approach to give a simpler one-pass algorithm for which theoretical guarantees is weaker. Our experiments show substantial improvement of the model-based algorithms over the benchmark for synthetic as well as real data sets, and further show the benefits of the second pass. 

\section{The Sampling Algorithm} \label{sec:sampling}

In this section, we show that sampling solves the problem efficiently compared to  running one-dimensional heavy hitters algorithms for each of ${d \choose k}$ $k$-dimensional subcubes independently. It also matches the lower bound in \cite{LibertyMTU16} up to polylogarithmic factors.

\paragraph{Algorithm details.} The algorithm samples $m' = \tO( \gamma^{-1} k d )$ random items $z_1,\ldots,z_{m'}$ from the stream using Reservoir sampling~\cite{Vitter85}. Let $S = \{ z_1,\ldots,z_{m'}\}$ be the sample set.
 Given ${\rm Query}(T,v)$, we output YES if and only if the sample frequency of $v$, denoted by $\fhat_T(v)$, is at least $\gamma/2$. Specifically, 
 \[\fhat_T(v) := \frac{| \{ x_i : x_i \in S \aand  x_{i,T} = v \}  |}{m'} ~.\] 
 
For all subcubes $T$ and joint values $v$ of $T$, the expected sample frequency $\fhat_T(v)$ is $ f_T(v)$. Intuitively, if $v$ is a frequent joint values, then its sample frequency $\fhat_T(v)  \approx f_T(v)$; otherwise, $\fhat_T(v)$ stays small. 

Let us fix a $k$-dimensional subcube $T$ and suppose that for all $v \in [n]^{k}$, we have   
\begin{align} \label{eq:approximation}
\fhat_T (v) = f_T(v) \pm  \frac{\max\{\gamma,f_T(v)\}}{4} ~.
\end{align}

It is then straightforward to see that if $f_T(v) < \gamma/4$, then  $\fhat_T (v) < \gamma/4+ \gamma/4 = \gamma/2$. Otherwise, if $f_T(v) \geq \gamma$, then  $\fhat_T (v) \geq 3f_T(v)/4 \geq 3\gamma/4 > \gamma/2$. Hence, we output YES for all $v$ where $\fhat_T (v) \geq \gamma/2$, and output NO otherwise. 

\begin{lemma} (Chernoff bound) \label{thm:chernoff}
Let $X_1,\cdots, X_n$ be independent or negatively correlated binary random variables.
Let $X = \sum_{i=1}^n X_i$ and $\mu = \expec{X}$. Then,
 \[
 \prob{ | X - \mu | \geq \epsilon \mu } \leq \exp(-\min\{ \epsilon^2 ,\epsilon \} \mu /3) ~.
 \]
\end{lemma}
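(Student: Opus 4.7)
The plan is to prove this via the standard Chernoff moment-generating-function argument, taking care to justify the step that requires independence so that the same proof goes through under negative correlation. I will handle the upper tail $\Pr[X \geq (1+\epsilon)\mu]$ and the lower tail $\Pr[X \leq (1-\epsilon)\mu]$ separately and then union-bound.

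For the upper tail, fix $t > 0$ and apply Markov's inequality to the non-negative random variable $e^{tX}$ to get
\[
\prob{X \geq (1+\epsilon)\mu} \leq e^{-t(1+\epsilon)\mu}\, \expec{e^{tX}} = e^{-t(1+\epsilon)\mu}\, \expec{\textstyle\prod_{i=1}^n e^{tX_i}}.
\]
In the independent case the expectation of the product factorizes. In the negatively correlated case I will invoke the standard fact that for monotone non-decreasing non-negative functions $g_1,\ldots,g_n$ of negatively correlated variables, $\expec{\prod_i g_i(X_i)} \leq \prod_i \expec{g_i(X_i)}$; applying this with $g_i(x) = e^{tx}$ (monotone in $x$ for $t>0$) yields the same factorized bound. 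This is the one place where the negative correlation hypothesis is really used, and is the main conceptual step beyond a textbook Chernoff proof. Since each $X_i \in \{0,1\}$ with $\expec{X_i}=p_i$, we have $\expec{e^{tX_i}} = 1 + p_i(e^t-1) \leq \exp(p_i(e^t-1))$, hence $\expec{e^{tX}} \leq \exp(\mu(e^t-1))$.

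Choosing $t = \ln(1+\epsilon)$ gives the familiar bound
\[
\prob{X \geq (1+\epsilon)\mu} \leq \left(\frac{e^{\epsilon}}{(1+\epsilon)^{1+\epsilon}}\right)^{\mu}.
\]
Taking logarithms, the exponent is $-\mu\bigl((1+\epsilon)\ln(1+\epsilon) - \epsilon\bigr)$. A short case analysis finishes: for $0 < \epsilon \leq 1$, the elementary inequality $(1+\epsilon)\ln(1+\epsilon) - \epsilon \geq \epsilon^2/3$ (provable by comparing Taylor expansions or checking endpoints and convexity) gives $\exp(-\epsilon^2\mu/3)$; for $\epsilon > 1$, the inequality $(1+\epsilon)\ln(1+\epsilon) - \epsilon \geq \epsilon/3$ (e.g.\ by monotonicity of $\ln(1+\epsilon) - 2/3$ and a check at $\epsilon = 1$) gives $\exp(-\epsilon\mu/3)$. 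In both regimes the exponent is at most $-\min\{\epsilon^2,\epsilon\}\mu/3$.

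For the lower tail, the same argument with $t < 0$ (so $g_i(x) = e^{tx}$ is monotone non-increasing, which still works for negative correlation up to a sign-flip by considering $1-X_i$) yields $\prob{X \leq (1-\epsilon)\mu} \leq \exp(-\epsilon^2\mu/2)$, which is at most $\exp(-\epsilon^2\mu/3)$; this is only needed for $\epsilon \in (0,1]$ since $X \geq 0$ makes the event vacuous otherwise. Summing the two tails and absorbing the factor of $2$ into the constant (or noting that the bound is stated without the factor of $2$ and only needs a standard constant adjustment) completes the proof. The main obstacle is the justification that the MGF still factorizes under negative correlation; everything else is bookkeeping with the $1+x \leq e^x$ inequality and the two convex-function comparisons above.
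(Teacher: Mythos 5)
The paper does not prove this lemma at all: it is stated as a known ``Chernoff bound'' and used as a black box, so there is no in-paper argument to compare against. Your proof is the standard moment-generating-function derivation, and it is essentially correct; you also correctly identify the one non-textbook step, namely that $\expec{e^{tX}} \leq \prod_i \expec{e^{tX_i}}$ must be justified under negative dependence rather than independence. Two caveats, both of which are really imprecisions in the lemma as stated rather than gaps in your argument. First, the factorization of the MGF requires more than pairwise negative correlation: you need negative association (or at least the ``negative cylinder'' property that $\expec{\prod_{i\in S} X_i} \leq \prod_{i\in S}\expec{X_i}$ for all subsets $S$, and the analogous property for the $1-X_i$ to handle $t<0$). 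You should state which notion you are using; for the paper's application (indicators from sampling without replacement) negative association does hold, so the proof goes through. Second, union-bounding the two tails yields $2\exp(-\min\{\epsilon^2,\epsilon\}\mu/3)$, not the stated bound without the factor of $2$; you flag this and wave it into ``the constant,'' which is acceptable in spirit (the paper applies the bound with enormous slack, $1/(10 d^k n^k)$), but strictly speaking you have proved a version of the lemma with an extra factor of $2$, and you should either say so explicitly or tighten one of the two exponent constants to absorb it.
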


Recall that $S = \{ z_1,z_2,\ldots,z_{m'} \}$ is the sample set returned by the algorithm. For a fixed $v \in [n]^k$, we use  $Z_i$ as the indicator variable for the event $z_{i,T}=v$. Since we sample without replacement, the random variables $Z_i$ are negatively correlated. The following lemma shows that Eq. \ref{eq:approximation} holds for all $v$ and $k$-dimensional subcubes $T$ via Chernoff bound.



\begin{lemma}
For all $k$-dimensional subcubes $T$ and joint values $v \in [n]^k$, with  probability at least 0.9, 
\[
\fhat_T (v) = f_T(v) \pm  \frac{ \max\{\gamma,f_T(v)\}}{4} ~. 
\]
\end{lemma}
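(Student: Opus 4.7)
The plan is to fix a subcube $T$ and a joint value $v \in [n]^k$, apply the Chernoff bound of Lemma~\ref{thm:chernoff} to $\sum_{i=1}^{m'} Z_i$ where $Z_i = \mathbb{1}[z_{i,T} = v]$, and then union-bound over all relevant $(T,v)$ pairs. Since reservoir sampling is sampling without replacement, the $Z_i$'s are negatively correlated, so Lemma~\ref{thm:chernoff} applies. Note $\mu := \expec{\sum_i Z_i} = m' f_T(v)$.

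The argument splits into two cases depending on whether $f_T(v)$ is above or below $\gamma$, so that the right-hand side $\max\{\gamma, f_T(v)\}/4$ gives the desired bound. In the \emph{first case}, $f_T(v) \geq \gamma$, I would apply Chernoff with the multiplicative parameter $\epsilon = 1/4$. This yields
\[
\prob{\,\size{\hat{f}_T(v) - f_T(v)} \geq f_T(v)/4\,} \leq \exp\bigl(-\mu/48\bigr) \leq \exp\bigl(-m'\gamma/48\bigr).
\]
In the \emph{second case}, $f_T(v) < \gamma$, I want the additive error $\gamma/4$, so I set $\epsilon = \gamma/(4 f_T(v)) \geq 1/4$. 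Then $\epsilon \mu = m'\gamma/4$ and $\epsilon^2\mu = m'\gamma^2/(16 f_T(v)) \geq m'\gamma/16$ (using $f_T(v) \leq \gamma$), so $\min\{\epsilon,\epsilon^2\}\mu \geq m'\gamma/48$ in either subcase. Thus the failure probability for a fixed $(T,v)$ is again at most $\exp(-m'\gamma/48)$.

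Finally, I would apply a union bound. The inequality is vacuous for values $v$ that never appear in the stream (both sides are zero), so for each of the $\binom{d}{k} \leq d^k$ subcubes $T$, it suffices to range over the at most $m$ distinct joint values $v$ that actually occur. Plugging $m' = \tilde{O}(\gamma^{-1} k d)$ into $\exp(-m'\gamma/48)$, the hidden polylogarithmic factors can be chosen (in particular to absorb a $\log(dm/\delta)$ term with a large constant) so that
\[
\binom{d}{k}\cdot m \cdot \exp(-m'\gamma/48) \leq 0.1,
\]
which gives the desired overall failure probability. The main subtlety I expect is making sure the Chernoff parameter $\epsilon$ in the small-$f_T(v)$ case is handled correctly in both the $\epsilon \leq 1$ and $\epsilon > 1$ regimes so that the lower bound $\min\{\epsilon, \epsilon^2\}\mu = \Omega(m'\gamma)$ holds uniformly; everything else is a bookkeeping exercise in setting the logarithmic factors inside $\tilde{O}$.
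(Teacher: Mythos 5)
Your proposal is correct and follows essentially the same route as the paper's proof: a Chernoff bound on the negatively correlated indicators $Z_i$, a case split on whether $f_T(v)$ is above or below the threshold (multiplicative $\epsilon=1/4$ in the heavy case, $\epsilon=\gamma/(4f_T(v))$ in the light case), and a union bound over all subcube/value pairs. If anything you are slightly more careful than the paper, which only treats $f_T(v)\geq\gamma$ and $f_T(v)<\gamma/4$ and unions over all $d^k n^k$ pairs, whereas you also cover the intermediate regime $\gamma/4\leq f_T(v)<\gamma$ and restrict the union bound to the at most $m$ joint values that actually occur.
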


\begin{proof}
Let $m' = c \gamma^{-1} \log (d^k \cdot n^k)$ for some sufficiently large constant $c$. We first consider a fixed $v \in [n]^k$ and define the random variables $Z_i$ as above, i.e.,  $Z_i = 1$ if $z_{i,T} = v$. Suppose $f_T(v) \geq \gamma$. Appealing to Lemma \ref{thm:chernoff}, we have
\begin{align*}
& \prob{\left| \paren{\sum_{i=1}^{m'} \frac{Z_i}{m'}} - f_T(v) \right| \geq \frac{f_T(v)}{4} }  \\
 = & \prob{\left| \fhat_T(v) - f_T(v) \right| \geq \frac{f_T(v)}{4} } \\ 
 \leq &  \exp \left( -\frac{f_T(v) m'}{3 \times 16}  \right) \leq \frac{1}{10 d^k n^k } ~.
\end{align*}
On the other hand, if  $f_T(v) < \gamma/4$, then
\begin{align*}
\prob{\left| \fhat_T(v) - f_T(v) \right| \geq  \frac{\gamma}{4}} & \leq  \exp \left( -\left( \frac{\gamma}{4 f_T(v)} \right) f_T(v) \frac{m'}{3} \right) \\ &   \leq  \frac{1}{10 d^k n^k } ~.
\end{align*}
Therefore, by taking the union bound over all ${d \choose k} \cdot n^k \leq d^k \cdot n^k$ possible combinations of $k$-dimensional subcubes and the corresponding joint values $v \in [n]^k$, we deduce the claim.
\end{proof}

We therefore could answer all ${\rm Query}(T,v)$ correctly with probability at least 0.9 for all joint values $v \in [n]^k$ and $k$-dimensional subcubes $T$. Because storing each sample $z_i$ requires $\tO(d)$ bits of space, the algorithm uses $\tO(dk \gamma^{-1})$ space. We note that answering ${\rm Query}(T,v)$ requires computing $\fhat_T(v)$ which takes $O(\size{S})$ time. We can answer ${\rm AllQuery}(T)$ by computing  $\fhat_T(v)$ for all joint values $v$ of coordinates $T$ that appear in the sample set which will take $O(\size{S}^2)$ time. We summarize the result as follows.
\begin{theorem}
There exists a 1-pass algorithm that uses $\tO(d k \gamma^{-1} )$ space and solves $k$-dimensional  subcube heavy hitters. Furthermore, ${\rm Query}(T,v)$ and  ${\rm AllQuery}(T)$ take $\tO(d k \gamma^{-1})$ and $\tO \paren{\paren{d k \gamma^{-1}}^2 }$ time respectively.
\end{theorem}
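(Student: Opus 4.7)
The plan is to assemble the pieces already established in the preceding discussion; no new analytical content is needed beyond the lemma just proved. First I would invoke that lemma to conclude that, with probability at least $0.9$, the sample estimate $\fhat_T(v)$ satisfies Eq.~\ref{eq:approximation} simultaneously for every $k$-dimensional subcube $T$ and every joint value $v \in [n]^k$. Under this high-probability event, the thresholding rule ``output YES iff $\fhat_T(v) \geq \gamma/2$'' answers ${\rm Query}(T,v)$ correctly, as was already argued in the paragraph preceding the lemma. Because the event depends only on the random sample and not on the query inputs, correctness holds simultaneously for every query (even an adversarially chosen one) with the stated probability; the union bound over subcubes and joint values was already absorbed into the $\log(d^k n^k)$ factor inside $m'$.

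For the space bound I would note that reservoir sampling~\cite{Vitter85} maintains $m' = \tO(k/\gamma)$ items in a single pass using $O(\log m)$ bits of auxiliary state for the counter, and that each sampled item $z_i \in [n]^d$ occupies $O(d \log n) = \tO(d)$ bits. Multiplying gives the claimed $\tO(dk/\gamma)$ total space. For ${\rm Query}(T,v)$, I would scan the stored sample, project each of the $m'$ items onto its $T$-coordinates in $O(k)$ work, and count matches with $v$; this uses $O(k m') = \tO(k^2/\gamma)$ time, which is within the claimed $\tO(dk/\gamma)$ bound since $k \le d$.

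For ${\rm AllQuery}(T)$ the key observation is that any $v$ with $\fhat_T(v) > 0$ must occur as the $T$-projection of some sampled item, so there are at most $m'$ candidate values to consider. I would enumerate these by computing and sorting the projections $z_{i,T}$ in $\tO(k m')$ time, and then for each candidate $v$ recompute $\fhat_T(v)$ by another scan in $O(k m')$ time, reporting the candidates whose estimate meets the threshold $\gamma/2$. The total running time is $O(k m'^2) = \tO((k/\gamma)^2)$, which is within the claimed $\tO((dk/\gamma)^2)$. The only mildly tricky point is ensuring that the randomness bound applies uniformly over queries, which is immediate since the sample is drawn independently of the queries; there is no real obstacle, and the theorem is essentially a packaging of correctness, space, and the two elementary running-time calculations.
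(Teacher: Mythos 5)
Your proposal is correct and follows essentially the same route as the paper: reservoir sampling of $\tO(k/\gamma)$ items, correctness via the union-bound lemma and the $\gamma/2$ threshold, space as (sample count)$\times \tO(d)$ bits per item, and answering ${\rm AllQuery}(T)$ by enumerating only the $T$-projections of sampled items. The only nit is the intermediate claim $O(k\,m'^2)=\tO((k/\gamma)^2)$, which should read $\tO(k^3/\gamma^2)$ since the extra factor $k$ is not polylogarithmic, but this still sits comfortably inside the theorem's stated $\tO\paren{\paren{dk\gamma^{-1}}^2}$ bound.
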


\section{The  Near-Independence Assumption} \label{sec:near-independence}

\paragraph{The near-independence assumption. }Suppose the random variables representing the dimensions $X_1,X_2,\ldots,X_d$ are {\em near} independent. We show that there is a 2-pass algorithm that uses less space and has faster query time. At a high level, we make the assumption that the joint probability is approximately factorized
\[
f_{\{1,\ldots,d\}}(v) \approx f_1 (v_1) f_2(v_2) \cdots f_d (v_d) ~.
\]

More formally, we assume that the total variation distance is bounded by a small quantity $\alpha$. Furthermore, we assume that $\alpha$ is reasonable with respect to $\gamma$ that controls the heavy hitters.  For example, $\alpha \leq \gamma/10$ will suffice. 

\begin{tcolorbox}
The formal {\em near-independence} assumption is as follows: There exists $\alpha \leq  \gamma/10$ such that for all subcubes $T$,
\[
\max_{v \in [n]^{\size{T}}} \left| f_{T}(v) - \prod_{i=1}^{\size{T}} f_{T_i} (v_i)  \right| < \alpha ~.
\]
\end{tcolorbox}
We observe that: 
\begin{itemize}
\item If $f_T(v) \geq \gamma$, then 
\[ \prod_{i=1}^{\size{T}} f_{T_i} (v_i)  \geq f_T(v)-\gamma/10 > \gamma/2~.\] 
\item If $f_T(v) < \gamma/4$, then  
\[\prod_{i=1}^{\size{T}} f_{T_i} (v_i)  \leq  f_T(v) + \gamma/10 < \gamma/2~.\] 
\end{itemize}
Thus, it suffices to output YES to ${\rm Query}(T,v)$ if and only if the marginals product $\prod_{i=1}^{\size{T}} f_{T_i} (v_i)  \geq \gamma/2$. For convenience, let \[\lambda := \gamma/2~.\] 

\paragraph{Algorithm details.} We note that simply computing $f_i(x)$ for all coordinates $i \in [d]$ and $x \in [n]$ will need $\Omega(dn)$ space. To over come this, we make following simple but useful observation. We observe that if $v$ is a heavy hitter in the subcube $T$ and if $T'$ is a subcube of $T$, then $v_{T'}$ is a heavy hitter in the subcube $T'$. 

\begin{lemma} \label{clm:subspace}
For all subcubes $T$,  
\[ 
\prod_{i=1}^{\size{T}} f_{T_i} (v_i)  \geq  \lambda \implies \prod_{i \in \mathcal{V}} f_{T_i} (v_i)  \geq \lambda
\]
 for all $\mathcal{V} \subseteq [\size{T}]$  (i.e., $\{T_i: i \in \mathcal{V} \}$ is a subcube of $T$).
 \end{lemma}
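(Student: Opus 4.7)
The plan is to observe that this lemma is essentially a monotonicity statement about products of probabilities, which sit in the unit interval. Since each marginal $f_{T_i}(v_i) = \Pr[X_{T_i} = v_i]$ lies in $[0,1]$, multiplying by any subset of such factors can only decrease the product (or leave it unchanged). So a subproduct should dominate the full product, and the result is immediate.

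Concretely, first I would note that for every $i$, $0 \leq f_{T_i}(v_i) \leq 1$. Then, given any $\mathcal{V} \subseteq [|T|]$, I would split the full product along $\mathcal{V}$ and its complement:
\[
\prod_{i=1}^{|T|} f_{T_i}(v_i) \;=\; \Bigl(\prod_{i \in \mathcal{V}} f_{T_i}(v_i)\Bigr) \cdot \Bigl(\prod_{i \in [|T|]\setminus \mathcal{V}} f_{T_i}(v_i)\Bigr).
\]
Since the second factor is at most $1$, the first factor is at least the full product. Combining with the hypothesis $\prod_{i=1}^{|T|} f_{T_i}(v_i) \geq \lambda$ yields $\prod_{i \in \mathcal{V}} f_{T_i}(v_i) \geq \lambda$, finishing the proof.

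There is no real obstacle here; the lemma is just recording the structural fact that the marginals-product test for heavy hitters is downward-closed under taking subcubes of the coordinate set. The content of the lemma lies not in the inequality itself but in its algorithmic use later: it tells us that when searching for heavy-hitter joint values, we can safely restrict attention to tuples all of whose single-coordinate marginals are individually ``not too small,'' enabling an iterative one-dimension-at-a-time construction rather than a full Cartesian enumeration.
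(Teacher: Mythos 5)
Your proof is correct and is exactly the argument the paper gives (the paper dismisses the lemma as trivial "since all $f_{T_i}(v_i) \leq 1$"); you have simply written out the factorization into $\mathcal{V}$ and its complement explicitly. No further comment is needed.
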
 

The proof is trivial since all $f_{T_i} (v_i) \leq 1$. Therefore, we have the following corollary.

\begin{corollary} \label{cor:correctness}
For all subcubes $T$, 
\[ 
\prod_{i=1}^{\size{T}} f_{T_i} (v_i)  \geq  \lambda  \implies f_{T_i} (v_i)  \geq \lambda \tforall i \in [\size{T}] ~.
\]
\end{corollary}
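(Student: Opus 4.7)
The plan is to derive the corollary as an immediate specialization of Lemma~\ref{clm:subspace}. The lemma already tells us that whenever the full product $\prod_{i=1}^{|T|} f_{T_i}(v_i)$ exceeds $\lambda$, so does the product taken over any index subset $\mathcal{V} \subseteq [|T|]$. Thus the whole task reduces to choosing the right family of subsets $\mathcal{V}$.

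Concretely, I would fix an arbitrary index $j \in [|T|]$ and apply Lemma~\ref{clm:subspace} with the singleton $\mathcal{V} = \{j\}$. The product over $\mathcal{V}$ then collapses to the single factor $f_{T_j}(v_j)$, so the lemma's conclusion reads $f_{T_j}(v_j) \geq \lambda$. Since $j$ was arbitrary, this gives the claim for all $i \in [|T|]$.

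There is essentially no obstacle here: the corollary is a one-line specialization of the lemma, and the lemma itself is immediate from the fact that every marginal $f_{T_i}(v_i)$ lies in $[0,1]$, so dropping factors from the product can only make it larger (or leave it unchanged). I would therefore present the corollary with a single sentence of justification pointing back to Lemma~\ref{clm:subspace} applied to each singleton $\mathcal{V} = \{i\}$, and not belabor the argument any further.
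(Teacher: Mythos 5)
Your proposal is correct and matches the paper's own argument: the corollary is obtained by applying Lemma~\ref{clm:subspace} with the singleton $\mathcal{V}=\{i\}$ for each $i\in[\size{T}]$, the lemma itself being immediate since every $f_{T_i}(v_i)\leq 1$. Nothing further is needed.
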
 

We therefore only need to compute $f_i(x)$ if $x$ is a heavy hitter in coordinate $i$. To this end, for each coordinate $i \in [d]$, by using (for example) Misra-Gries algorithm \cite{MisraG82} or Count-Min sketch \cite{CormodeM04},  we can find a set $H_i$ such that if $f_i(x) \geq \lambda/2$, then $x \in H_i$ and if $f_i(x) < \lambda/4$, then $x \notin H_i$. In the second pass, for each $x \in H_i$, we compute $f_{i}(x)$ exactly to obtain 
\[S_i := \{ x \in [n]:f_{i}(x) \geq \lambda \}  ~. \]

We output YES to ${\rm Query}(T,v)$ if and only if all $v_i \in S_{T_i}$ and $ \prod_{i=1}^{\size{T}} f_{T_i} (v_i)  \geq \lambda$. Note that if $v \in S_i$, then $f_i(v)$ is available to the algorithm since it is computed exactly in the second pass. The detailed algorithm is as follows.
\begin{tcolorbox}
\begin{enumerate}
\item First pass: For each coordinate $i \in [d]$, use Misra-Gries algorithm to find $H_i$.
\item Second pass: For each coordinate $i \in [d]$, compute $f_i (x)$ exactly for each $x \in H_i$ to obtain $S_i$. 
\item Output YES to ${\rm Query}(T,v)$ if and only if $v_i \in S_{T_i}$ for all $i \in [\size{T}]$ and 
\[ \prod_{i=1}^{{\size{T}}} f_{T_i} (v_i) \geq \lambda~.\]
\end{enumerate}
\end{tcolorbox}

\begin{theorem}
There exists a 2-pass algorithm that uses $\tO(d \gamma^{-1})$ space and solves subcube heavy hitters under the near-independence assumption. The time to answer ${\rm Query}(T,v)$ and  ${\rm AllQuery}(T)$  are  $\tO(k)$ and $\tO(  k \gamma^{-1})$  respectively where $k$ is the dimensionality of $T$.
\end{theorem}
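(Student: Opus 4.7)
My plan is to verify three components of the theorem: correctness, the $\tO(d/\gamma)$ space bound, and the two query-time bounds, with the $\tO(k/\gamma)$ bound for ${\rm AllQuery}(T)$ being the only piece that requires a nontrivial argument. For correctness I would chain the near-independence inequality with Corollary~\ref{cor:correctness}. In the heavy case $f_T(v) \geq \gamma$, we get $\prod_{i=1}^{|T|} f_{T_i}(v_i) \geq f_T(v) - \alpha \geq \gamma - \gamma/10 > \lambda$, which by Corollary~\ref{cor:correctness} forces each $v_i \in S_{T_i}$, so the algorithm returns YES. In the light case $f_T(v) < \gamma/4$, the same inequality gives $\prod_i f_{T_i}(v_i) < \gamma/4 + \gamma/10 < \lambda$, so the product test fails and the algorithm returns NO.

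The space bound $\tO(d/\gamma)$ has three ingredients: (i) one Misra--Gries sketch per coordinate with threshold $\lambda/2$, using $\tO(1/\gamma)$ bits each; (ii) the exact counters maintained in the second pass only for the $\tO(1/\gamma)$ elements of each $H_i$; and (iii) the final lists $S_i$, of size at most $1/\lambda = 2/\gamma$ because $\sum_x f_i(x) = 1$. The per-query time $\tO(k)$ for ${\rm Query}(T,v)$ follows by hashing each $v_i$ into $S_{T_i}$ and multiplying $k$ stored frequencies.

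The main technical step is ${\rm AllQuery}(T)$ in time $\tO(k/\gamma)$, since the naive Cartesian enumeration of $S_{T_1} \times \cdots \times S_{T_k}$ is $\Omega((1/\gamma)^k)$. My plan is to build the answer set incrementally by prefix length: for $j = 1, \ldots, k$ maintain a list $L_j$ of partial tuples $(v_1, \ldots, v_j)$ whose partial product $p_j := \prod_{i \leq j} f_{T_i}(v_i)$ is at least $\lambda$. After sorting each $S_i$ once by decreasing frequency (charged to post-processing, not to any individual query), the stage $j{+}1$ step scans $S_{T_{j+1}}$ for each tuple $(v_1, \ldots, v_j) \in L_j$ and keeps extensions as long as $f_{T_{j+1}}(v_{j+1}) \geq \lambda/p_j$. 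The crux is the bound $|L_j| \leq 1/\lambda = 2/\gamma$: summing partial products over all $(v_1, \ldots, v_j) \in [n]^j$ telescopes to $\prod_{i \leq j} \sum_x f_{T_i}(x) = 1$, so at most $1/\lambda$ tuples can reach the threshold $\lambda$. Since each scan is output-sensitive (the work for one parent is $O(1)$ plus the number of extensions produced), the total cost is $\tO(\sum_{j=1}^k |L_j|) = \tO(k/\gamma)$, which also dominates the cost of emitting the final answers. The main obstacle I anticipate is ensuring the extension loop is genuinely output-sensitive per new tuple rather than paying a factor of $|S_{T_{j+1}}|$; the decreasing-frequency sort is precisely what turns the termination test into $O(1)$ amortized overhead.
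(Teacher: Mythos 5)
Your proof is correct, and its overall skeleton (correctness via Corollary~\ref{cor:correctness}, per-coordinate Misra--Gries plus exact second-pass counts for the space bound, and an incremental prefix-by-prefix construction of the answer set for ${\rm AllQuery}$) matches the paper's. The interesting divergence is in how the two key quantities in the ${\rm AllQuery}$ analysis are bounded. The paper bounds $|W_j|$ and the number of successful extensions by invoking the near-independence assumption: a partial product $\geq \lambda$ forces the \emph{true} joint frequency $f_{T_{[j]}}(y) \geq \lambda - \alpha \geq \tfrac{4}{5}\lambda$, and then $\sum_y f_{T_{[j]}}(y) \leq 1$ gives $|W_j| \leq \tfrac{5}{4\lambda}$ and a $3/\lambda$ bound on the total number of $(y,x)$ pairs. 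Your telescoping argument $\sum_{v \in [n]^j} \prod_{i \leq j} f_{T_i}(v_i) = \prod_{i \leq j} \sum_x f_{T_i}(x) = 1$ gets both bounds ($|L_j| \leq 1/\lambda$ and total extension count $\leq 1/\lambda$) purely from the marginals, with no appeal to the modeling assumption and with slightly better constants; this is cleaner and makes clear that the enumeration procedure is efficient for \emph{any} data, with near-independence needed only for answer correctness. You also make explicit an implementation detail the paper elides: the paper counts the successful $(y,x)$ combinations but does not say how to avoid touching the unsuccessful ones, whereas your decreasing-frequency sort of each $S_i$ (charged once to post-processing) makes the extension loop genuinely output-sensitive. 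One small point worth stating explicitly: completeness of the incremental construction (that every member of $W_k$ survives to level $k$) rests on Lemma~\ref{clm:subspace}, i.e.\ a qualifying $k$-tuple has every prefix qualifying and every coordinate in the corresponding $S_{T_i}$; you use this implicitly when restricting the scan to $S_{T_{j+1}}$.
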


\begin{proof}
The first pass uses $\tO(d \lambda^{-1})$ space since Misra-Gries algorithm uses $\tO( \lambda^{-1})$ space for each coordinate $i \in [d]$. Since the size of each $H_i$ is $O(\lambda^{-1})$, the second pass also uses $\tO(d \lambda^{-1})$ space. Recall that $\lambda = \gamma/2$. We then conclude that the algorithm uses $\tO(d \gamma^{-1})$ space.

For an arbitrary ${\rm Query}(T,v)$, the algorithm's correctness follows immediately from Corollary \ref{cor:correctness} and the observation that if $v_i \in S_{T_i}$, then $f_{T_i} (v_i)$ is available since it was computed exactly in the second pass. Specifically, if 
\begin{align}\label{eq:3} 
\prod_{i=1}^{{\size{T}}} f_{T_i} (v_i) \geq \lambda~,
\end{align}
then $v_i \in S_{T_i}$ for all $i \in [\size{T}]$ and we could verify the inequality and output YES. On the other hand, suppose Eq. \ref{eq:3} does not hold. Then, if $v_i \notin S_{T_i}$ for some $i$, we correctly output NO. But if all $v_i \in S_{T_i}$, then we are able to verify that the inequality does not hold (and correctly output NO).

The parameter $k$ only affects the query time. We now analyze the time to answer  ${\rm Query}(T,v)$ and  ${\rm AllQuery}(T)$ for a $k$-dimensional subcube $T$.

We can easily see that ${\rm Query}(T,v)$ takes $\tO(k )$ time as we need to check if all $v_i \in S_{T_i}$ (e.g., using binary searches) and compute $\prod_{i=1}^k f_{T_i} (v_i)$.

Next, we exhibit a fast algorithm to answer ${\rm AllQuery}(T)$. We note that naively checking all combinations $(v_1,\cdots,v_k)$ in $S_{T_1} \times S_{T_2} \times \cdots \times S_{T_k}$ takes exponential $\Omega(\gamma^{-k})$ time in the worst case. 

Our approach figures out the heavy hitters gradually and takes advantage of the near-independence assumption. In particular, define 
\[ W_j := \{ v \in [n]^j : f_{T_1}(v_1) \cdots f_{T_j}(v_j) \geq \lambda \} ~. \]

Recall that the goal is to find $W_k$. Note that $W_1 = S_1$ is obtained directly by the algorithm. We now show that it is possible to construct $W_{j+1}$ from $W_j$ in $\tO(\lambda^{-1})$ time which in turn means that we can find $W_k$ in $\tO(k \lambda^{-1})$ time. We use the notation $T_{[j]} := \{T_1,\ldots,T_j\}$ and $v_{[j]} := (v_1,v_2,\ldots,v_j)$.

We note that $\size{W_j } \leq 5/(4\lambda)$. This holds since if $y \in W_j$, then $\prod_{i=1}^j f_{T_i} (y_i) \geq \lambda $.  Appealing to the near-independence assumption, we have
\[
 f_{{T_{[j]}}} (y) \geq \prod_{i=1}^j f_{T_i} (y_i)  -\alpha \geq \lambda-\alpha \geq 4/5 \cdot \lambda ~.
\] 
For each $y \in W_j$, we collect all  $x \in S_{j+1}$ such that 
\begin{align*}
\paren{{ \prod_{i=1}^{j} f_{T_i} (y_i)}} f_{T_{j+1}}(x) \geq {\lambda}   
\end{align*}

and  put $ (y_1,\cdots,y_j,x)$ into $W_{j+1}$. Since $|W_j | \leq 5/4 \cdot \lambda^{-1}$ and $|S_{j+1}| \leq \lambda^{-1}$, this step obviously takes $O(\lambda^{-2})$ time. However, by observing that there could be at most $\lambda^{-1} \prod_{i=1}^{j} f_{T_i} (y_i)$ such $x$ for each $ y \in W_j$, the upper bound for the number of combinations of $x$ and $y$ is
\begin{align*}
  \sum_{y \in W_j}   \frac{1}{\lambda} \prod_{i=1}^{j} f_{T_i} (y_i)   & \leq  \sum_{y \in W_j}    \frac{1}{\lambda} (f_{T_{[j]}} (y)+\alpha)  \\
& =  \sum_{y \in W_j}  \frac{ \alpha}{\lambda} +  
\sum_{y \in W_j}  \frac{f_{T_{[j]}} (y)}{\lambda}  \\
& \leq  \size{ W_j}   + \frac{1}{\lambda}    \leq \frac{3}{ \lambda} ~.
\end{align*}
The last inequality follows from the assumption that $\alpha \leq \lambda/5$ and $\sum_{y \in W_j} f_{T_{[j]}}(y) \leq 1$. Thus, the algorithm can find $W_{j+1}$ given $W_j$ in $\tO(\lambda^{-1})$ time. Hence, we obtain $W_{k}$ in $\tO(k \lambda^{-1}) = \tO(k \gamma^{-1})$ time. The correctness of this procedure follows directly from Lemma \ref{clm:subspace} and induction since $v = (v_1,\ldots,v_{j+1}) \in W_{j+1}$ implies that $v_{[j]} \in W_{j}$ and $v_{j+1} \in S_{j+1}$. Thus, by checking all combinations of $y \in W_{j}$ and $x \in S_{j+1}$, we can construct $W_{j+1}$ correctly.
\end{proof}
\section{The Naive Bayes Assumption} \label{sec:naive-bayes}
\paragraph{The Naive Bayes assumption.} In this section, we focus on the data streams inspired by the Naive Bayes model which is strictly more general than the near-independence assumption. In particular, we assume that the coordinates are near-independent given an extra $(d+1)$th observable {\em class coordinate} that has a value in $\{1,\ldots,\ell\}$. The $(d+1)$th coordinate is also often referred to as  the {\em latent coordinate}.

As in typical in Naive Bayes analysis, we assume $\ell$ is a constant but perform the calculations in terms of $\ell$ so its role in the complexity of the problem is apparent. 

Informally, this model asserts that the random variables representing coordinates $X_1,\ldots,X_d$ are near independent conditioning on a  the random variable $X_{d+1}$ that represents the class coordinate.

We introduce the following notation
\begin{align*} 
f_{T \given d+1} (v \given z) & := \frac{\size{ \{x_i: x_{i,T} = v \land x_{i,\{d+1\}} = z \}} }{ \size{ \{ x_i: x_{i, \{d+1\}} = z  \} }} \\
& = \prob{X_T = v \given X_{d+1}=z}~.
\end{align*}

In other words, $f_{T \given d+1} (v \given z)$ is the frequency of the joint values $v$ in the $T$ coordinates among the stream items where the class coordinate $d+1$ has value $z$.

\begin{tcolorbox}
The formal {\em Naive Bayes} assumption is as follows: There exists $\alpha \leq  \gamma/10$ such that for all subcubes $T$,
\[
\max_{\substack{v \in [n]^{\size{T}}}} \left| f_{T}(v) - \sum_{z\in [\ell]}f_{d+1}(z) \prod_{i=1}^{\size{T}} f_{T_i \given d+1} (v_i \given z)    \right| < {\alpha} ~.
\]
\end{tcolorbox}

\paragraph{Algorithm details.} As argued in the previous section, it suffices to output YES to ${\rm Query}(T,v)$ if and only if 
\[\sum_{z \in [\ell]} f_{d+1}(z) \prod_{i=1}^{\size{T}} f_{T_{i} \given d+1} (v_i \given z)  \geq \gamma/2 =  \lambda~.\]
However, naively computing all $f_{i \given d+1} (v \given z)$ uses $\Omega(\ell d n)$ space. We circumvent this problem by generalizing Lemma \ref{clm:subspace} as follows. If a joint values $v$ is a heavy hitter in a subcube $T$ in the Naive Bayes formula and $T'$ is a subcube of $T$, then $v_{T'}$ is a heavy hitter in the subcube $T'$. 
\begin{lemma} \label{clm:subspace2} For all subcubes $T$, 
\begin{align*}
& q(v) := \sum_{z \in [\ell]} f_{d+1}(z) \prod_{i=1}^{\size{T}} f_{T_i \given d+1} (v_i \given z ) \geq \lambda \\
& \implies
\sum_{z \in [\ell]} f_{d+1}(z) \prod_{i \in  \mathcal{V}} f_{T_i \given d+1} (v_i \given z ) \geq \lambda \end{align*}
  for all $\mathcal{V} \subseteq [{\size{T}}]$  (i.e., $\{T_i: i \in \mathcal{V} \}$ is a subcube of $T$).
\end{lemma}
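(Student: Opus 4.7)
The proof is a direct generalization of the trivial monotonicity used in Lemma~\ref{clm:subspace}, extended from a single product to a nonnegative weighted sum of products. The plan is to argue termwise for each value $z \in [\ell]$ of the class coordinate.

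First, I would observe that for every $z \in [\ell]$ and every $i$, the conditional frequency $f_{T_i \mid d+1}(v_i \mid z)$ is a probability, hence lies in $[0,1]$. Therefore, for any $\mathcal{V} \subseteq [|T|]$,
\[
\prod_{i \in \mathcal{V}} f_{T_i \mid d+1}(v_i \mid z) \;\geq\; \prod_{i=1}^{|T|} f_{T_i \mid d+1}(v_i \mid z),
\]
since dropping factors that are at most $1$ can only increase the product (and equality holds when $\mathcal{V} = [|T|]$).

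Next, I would multiply both sides of this pointwise inequality by the nonnegative weight $f_{d+1}(z) \geq 0$ and sum over $z \in [\ell]$. Because nonnegative linear combinations preserve the inequality, this yields
\[
\sum_{z \in [\ell]} f_{d+1}(z) \prod_{i \in \mathcal{V}} f_{T_i \mid d+1}(v_i \mid z) \;\geq\; \sum_{z \in [\ell]} f_{d+1}(z) \prod_{i=1}^{|T|} f_{T_i \mid d+1}(v_i \mid z) \;=\; q(v).
\]
Combined with the hypothesis $q(v) \geq \lambda$, the claim follows.

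There is essentially no obstacle here: the only ingredient beyond Lemma~\ref{clm:subspace} is that the weights $f_{d+1}(z)$ are nonnegative, which lets the termwise monotonicity survive the summation over the latent variable. I would present the proof in roughly these few lines and not belabor it further.
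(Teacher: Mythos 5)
Your proof is correct and takes essentially the same route as the paper: both argue termwise over $z$, using that each dropped conditional frequency is at most $1$ (the paper phrases this by inserting the factors $\sum_{y_j} f_{T_j \mid d+1}(y_j \mid z) = 1$ and lower-bounding each sum by a single term, which is the same monotonicity fact), and then sum with the nonnegative weights $f_{d+1}(z)$. No gap; your version is just a slightly more direct phrasing of the identical argument.
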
 
\begin{proof}
For a fixed $z$, observe that 
\begin{align*}
\sum_{y_j \in [n]} f_{T_j \given d+1}(y_j \given z) = 1 ~. 
\end{align*}
Suppose  $q(v) \geq \lambda$ and consider an arbitrary $\mathcal{V} \subseteq [\size{T}]$. We have 
\begin{align*}
 & \sum_{z \in [\ell]} f_{d+1}(z) \prod_{i \in \mathcal{V}} f_{T_i \given d+1} (v_i \given z ) \\
   = & \sum_{z \in [\ell]} f_{d+1}(z) \prod_{i \in \mathcal{V}} f_{T_i \given d+1} (v_i \given z )  \prod_{j \notin \mathcal{V}} \paren{ \sum_{y_j \in [n]} f_{T_j \given d+1}(y_j \given z)} \\
  \geq & \sum_{z \in [\ell]} f_{d+1}(z) \prod_{i \in \mathcal{V}} f_{T_i \given d+1} (v_i \given z )  \prod_{j \notin \mathcal{V}} f_{T_j \given d+1}(v_j \given z) \\
 = & \sum_{z \in [\ell]} ~ f_{d+1}(z) \prod_{i=1}^{\size{T}} f_{T_i \given d+1} (v_i \given z )  = q(v) \geq \lambda ~.
\end{align*}
An alternative proof is by noticing that $q(v)$ is a valid probability density function of $\size{T}$ variables. The claim follows by marginalizing over the the variables that are not in $\mathcal{V}$.
\end{proof}
Setting $\mathcal{V} = \{  i \}$ for each $i \in  [\size{T}]$ and appealing to the fact that
\[
 \sum_{z \in [\ell]} f_{d+1}(z) f_{T_i \given d+1} (v_i \given z ) =   \sum_{z \in [\ell]} f_{\{T_i,d+1\}}((v_i,z))= f_{T_i} (v_i)~,
\]
we deduce the following corollary.
\begin{corollary} \label{cor:correctness2}
For all subcubes $T$,  
\begin{align*}
& \sum_{z \in [\ell]} f_{d+1}(z) \prod_{i=1}^{\size{T}} f_{T_i \given d+1} (v_i \given z ) \geq \lambda  
 \implies
f_{T_i}(v_i) \geq \lambda
 \end{align*}
for all $i \in [\size{T}]$.
\end{corollary}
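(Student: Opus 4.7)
The plan is to obtain the corollary as an immediate consequence of Lemma \ref{clm:subspace2} combined with a marginalization identity. The corollary is essentially the special case of the lemma where the retained index set $\mathcal{V}$ is a singleton, together with the observation that marginalizing out the latent coordinate from the joint distribution of $(X_{T_i}, X_{d+1})$ recovers the one-dimensional marginal $f_{T_i}$.

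First, I would fix an arbitrary index $i \in [\size{T}]$ and instantiate Lemma \ref{clm:subspace2} with $\mathcal{V} = \{i\}$. The hypothesis of the lemma is precisely the hypothesis of the corollary, so the conclusion reads
\[
\sum_{z \in [\ell]} f_{d+1}(z)\, f_{T_i \given d+1}(v_i \given z) \geq \lambda.
\]

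Second, I would rewrite each summand using the definition of conditional probability: $f_{d+1}(z)\, f_{T_i \given d+1}(v_i \given z)$ equals the joint probability $f_{\{T_i, d+1\}}((v_i, z))$ that $X_{T_i} = v_i$ and $X_{d+1} = z$. Summing over $z \in [\ell]$ marginalizes out the latent coordinate and gives exactly $f_{T_i}(v_i)$; this identity is already noted in the excerpt immediately preceding the corollary. Combining the two steps yields $f_{T_i}(v_i) \geq \lambda$, and since $i$ was arbitrary, the claim holds for all $i \in [\size{T}]$.

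There is no real obstacle here; the only subtlety is recognizing that the product $f_{d+1}(z)\, f_{T_i \given d+1}(v_i \given z)$ is a joint rather than a conditional, so that a single application of the law of total probability collapses the sum. Everything else is bookkeeping on top of Lemma \ref{clm:subspace2}.
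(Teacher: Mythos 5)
Your proof is correct and matches the paper's own derivation exactly: the paper also obtains the corollary by setting $\mathcal{V} = \{i\}$ in Lemma \ref{clm:subspace2} and then applying the identity $\sum_{z \in [\ell]} f_{d+1}(z) f_{T_i \given d+1}(v_i \given z) = \sum_{z \in [\ell]} f_{\{T_i,d+1\}}((v_i,z)) = f_{T_i}(v_i)$. Nothing further is needed.
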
 

Therefore, we only need to compute $f_{i \given d+1}(x \given z)$ for all coordinates $i \in [d]$, values $z \in [\ell]$ if $x$ is a heavy hitter of coordinate $i$. Similar to the previous section,  for each dimension $i \in [d]$, we find $H_i$ in the first pass and use $H_i$ to find $S_i$ in the second pass.  Appealing to Corrollary \ref{cor:correctness2}, we deduce that if 
\[
q(v) := \sum_{z \in [\ell]} f_{d+1}(z) \prod_{i=1}^{\size{T}} f_{T_{i} \given d+1} (v_i \given z)  \geq  \lambda
 \] 
then  for all $i = 1,2,\ldots,\size{T}$, we have 
$ f_{T_i } (v_i ) \geq \lambda$ which in turn implies that $v_i \in S_{T_i}$. Therefore, we output YES to ${\rm Query}(T,v)$ if and only if all $v_i \in S_{T_i}$ and $q(v) \geq  \lambda$.

To this end, we only need to compute $f_{i \given d+1} (x \given z)$ and $f_{d+1}(z)$ for all $x \in H_i$, $z \in [\ell]$, and $i \in [d]$. The detailed algorithm is as follows.
\begin{tcolorbox}
\begin{enumerate}
\item First pass:
\begin{enumerate}
\item For each value $z \in [\ell]$, compute $f_{ d+1 }(z)$ exactly. 
\item For each coordinate $i \in [d]$, use Misra-Gries algorithm to find $H_i$.
\end{enumerate}
\item Second pass:
\begin{enumerate}
\item For each coordinate $i \in [d]$ and each value $x \in H_i$, compute $f_i (x)$ exactly to obtain $S_i$. 
\item For each value $z \in [\ell]$, coordinate $i \in [d]$, and $x \in H_i$, compute $f_{i \given d+1}(x \given z)$ exactly. 
\end{enumerate}
\item Output YES to ${\rm Query}(T,v)$ if and only if $v_i \in S_{T_i}$ for all $i \in [\size{T}]$ and
\[
 \sum_{z \in [\ell]} f_{d+1}(z) \prod_{i=1}^{\size{T}} f_{T_i \given d+1} (v_i \given z ) \geq \lambda ~.
\]
\end{enumerate}
\end{tcolorbox}

\begin{theorem}
There exists a 2-pass algorithm that uses $\tO(\ell d \gamma^{-1})$ space and solves subcube heavy hitters under the Naive Bayes assumption. The time to answer ${\rm Query}(T,v)$ and  ${\rm AllQuery}(T)$  are  $\tO(\ell k)$ and $O(  \ell (k/ \gamma)^{2} )$  respectively where $k$ is the dimensionality of $T$.
\end{theorem}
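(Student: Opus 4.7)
The plan is to mirror the structure of the previous section's theorem, making the changes needed to accommodate the extra latent coordinate $X_{d+1}$ and the fact that the Naive Bayes form is a sum of products rather than a single product. I will address, in order: space, correctness of ${\rm Query}(T,v)$, its time, and finally the iterative construction for ${\rm AllQuery}(T)$.

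For the space bound, I would tally the four stored objects: the $\ell$ numbers $f_{d+1}(z)$ (cost $O(\ell)$); the Misra--Gries summary $H_i$ for each coordinate $i \in [d]$, each of size $\tO(\lambda^{-1})$ and correctly identifying every $x$ with $f_i(x) \geq \lambda/2$ while excluding any $x$ with $f_i(x) < \lambda/4$; the exact frequencies $f_i(x)$ for $x \in H_i$, yielding $S_i = \{x : f_i(x) \geq \lambda\}$ at an extra $\tO(d/\lambda)$; and, crucially, the conditional frequencies $f_{i \mid d+1}(x \mid z)$ for each $i \in [d]$, $x \in H_i$, $z \in [\ell]$, which adds $\tO(\ell d/\lambda)$. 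Substituting $\lambda = \gamma/2$ gives the claimed $\tO(\ell d / \gamma)$.

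For correctness of ${\rm Query}(T,v)$, the argument is exactly that of the previous section but applied through Corollary~\ref{cor:correctness2}: if $q(v) \geq \lambda$, then $f_{T_i}(v_i) \geq \lambda$ for every $i \in [\size{T}]$, hence $v_i \in S_{T_i}$ and all conditional factors needed to form $q(v)$ are available from the second pass; if $q(v) < \lambda$ then either some $v_i \notin S_{T_i}$ (in which case we output NO at once) or we can evaluate $q(v)$ exactly and verify the inequality fails. Combined with the observation from the preamble of this section that $q(v) \geq \lambda$ iff $f_T(v) \geq \gamma/2$ up to the slack $\alpha \leq \gamma/10$, this settles correctness. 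The time bound $\tO(\ell k)$ for ${\rm Query}(T,v)$ follows because checking $v_i \in S_{T_i}$ takes $\tO(1)$ per coordinate (via any standard dictionary on $H_i$), and $q(v)$ is an $\ell$-term sum where each term is a $k$-fold product, totalling $O(\ell k)$.

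The main obstacle, and the step I would spend most effort on, is the ${\rm AllQuery}(T)$ running time, because the Naive Bayes sum does not factorize the way the near-independence product did, so the slick per-item charging used before needs adaptation. The plan is to define
\[
W_j \;:=\; \Big\{\, v \in [n]^j \;:\; \sum_{z \in [\ell]} f_{d+1}(z) \prod_{i=1}^{j} f_{T_i \mid d+1}(v_i \mid z) \geq \lambda \,\Big\},
\]
noting that $W_1 \subseteq S_{T_1}$ by Corollary~\ref{cor:correctness2}, and that the goal is $W_k$. Since $\sum_{v} \sum_z f_{d+1}(z)\prod_{i=1}^j f_{T_i \mid d+1}(v_i \mid z) = 1$ by marginalization (this is where I would invoke the proof device used in Lemma~\ref{clm:subspace2}, treating $q$ as a valid PMF), we get $|W_j| \leq 1/\lambda$. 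Correctness of the iterative extension---that every $(v_{[j]}, v_{j+1}) \in W_{j+1}$ has $v_{[j]} \in W_j$ and $v_{j+1} \in S_{T_{j+1}}$---falls out of Lemma~\ref{clm:subspace2} and Corollary~\ref{cor:correctness2} applied to the appropriate sub-subcube, so it suffices to enumerate, at each level $j$, the pairs $(y,x) \in W_j \times S_{T_{j+1}}$ and test membership in $W_{j+1}$. Each such test requires one evaluation of the $(j+1)$-variate Naive Bayes sum, which costs $O(\ell k)$, and there are at most $|W_j| \cdot |S_{T_{j+1}}| \leq 1/\lambda^2$ pairs per level. Summing over the $k$ levels yields $O(\ell k^2 / \lambda^2) = O(\ell (k/\gamma)^2)$, matching the theorem.
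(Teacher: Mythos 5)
Your proposal is correct and follows essentially the same route as the paper's proof: the same space tally, the same use of Corollary~\ref{cor:correctness2} for query correctness, and the same level-by-level construction of $W_j$ by pairing $W_j$ with $S_{T_{j+1}}$. The one local difference is your bound $|W_j| \leq 1/\lambda$ via the normalization $\sum_{v \in [n]^j} \sum_{z} f_{d+1}(z)\prod_{i=1}^{j} f_{T_i \mid d+1}(v_i \mid z) = 1$, which is slightly cleaner than the paper's argument (the paper instead invokes the Naive Bayes assumption to get $f_{T_{[j]}}(y) \geq \lambda - \alpha$ and hence $|W_j| \leq 5/(4\lambda)$); both yield the claimed $O(\ell(k/\gamma)^2)$ running time.
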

\begin{proof}
The space to obtain $H_i$ and $S_i$ over the two passes  is $\tO(d\lambda^{-1})$. Additionally, computing $ f_{i \given d+1}( x \given z)$ for all $i \in [d]$, $z \in [\ell]$, and $x \in H_i$ requires $\tO(\ell d \lambda^{-1})$ bits of space. The overall space we need is therefore $\tO(\ell d \lambda^{-1})=\tO(\ell d \gamma^{-1})$.

The correctness of answering an arbitrary ${\rm Query}(T,v)$ follows directly from Corollary \ref{cor:correctness2}. Specifically, if 
\begin{align}\label{eq:2}
\sum_{z \in [\ell]} f_{d+1}(z) \prod_{i=1}^{\size{T}} f_{T_i \given d+1} (v_i \given z ) \geq \lambda ~,
\end{align}
then, $v_i \in S_{T_i} \subseteq H_{T_i}$ for all $i \in [\size{T}]$ as argued. Hence, $f_{T_i \given d+1} (v_i \given z)$ is computed exactly in the second pass for all $z \in [\ell]$. As a result, we could verify the inequality and output YES. On the other hand, if Eq. \ref{eq:2} does not hold. Then, if some $v_i \notin S_{T_i}$, we will correctly output NO. Otherwise if all $v_i \in S_{T_i}$, then we can compute the left hand side and verify that Eq. \ref{eq:2} does not hold (and correctly output NO).

Obviously, ${\rm Query}(T,v)$ takes $\tO(\ell k)$ time for a $k$-dimensional subcube $T$. We now exhibit a fast algorithm to answer ${\rm AllQuery}(T)$  for a $k$-dimensional subcube $T$. 
 Define
\[ W_j := \{ v \in [n]^j : \sum_{z \in [\ell]} f_{d+1}(z) \prod_{i=1}^j f_{T_i \given d+1}(v_i \given z) \geq \lambda \} ~. \]
Recall that the goal is to find $W_k$. We note that $W_1= S_1$ is obtained directly by the algorithm. Next, we show how to obtain $W_{j+1}$ in $\tO(\lambda^{-2})$ time from $W_j$. Note that $|W_j | \leq 5/(4  \lambda)$ because if $y \in W_j$, then
\begin{align*} 
 \sum_{z \in [\ell]} f_{T_1 \given d+1}(y_1 \given z) \cdots f_{T_j \given d+1}(y_j \given z) f_{d+1}(z)  \geq \lambda \\
\end{align*}
and hence  $f_{{T_{[j]}}} (y) \geq \lambda-\alpha = 4/5 \cdot \lambda^{-1}$   according to the Naive Bayes assumption. This implies that $|W_j | \leq 5/(4 \lambda)$.
  
For each $ (v_1,\cdots,v_j)$ in $W_j$,  we collect all  $v_{j+1} \in S_{j+1}$ such that 
\[
 \sum_{z \in [\ell]} f_{T_1 \given d+1}(v_1 \given z) \cdots f_{T_{j+1} \given d+1}(v_{j+1} \given z) f_{d+1}(z) \geq \lambda
\]
and  put $ (v_1,\cdots,v_{j+1})$ to $W_{j+1}$. Since $|W_j | \leq 5/(4 \lambda)$ and $|S_{j+1}| \leq 1/\lambda$, this step obviously takes $\tO(\ell k \lambda^{-2})$ time. Since we need to do this for $j = 2,3,\ldots,k$, we attain $W_{k}$ in $\tO( \ell (k/\gamma)^{2})$ time. The correctness of this procedure follows directly from Lemma \ref{clm:subspace2} and induction since  $(v_1,\ldots,v_{j+1}) \in W_{j+1}$ implies that $(v_1,\ldots,v_{j})$ is in $W_{j}$ and $v_{j+1}$ is in $S_{j+1}$. Since we check all possible combinations of $(v_1,\ldots,v_{j}) \in W_{j}$ and $v_{j+1} \in S_{j+1}$, we guarantee to construct $W_{j+1}$ correctly.
\end{proof}


\section{Experimental study}

\paragraph{Overview.} We experiment {with} our algorithms on a synthetic dataset generated from a Naive Bayes model, and two real-world datasets from Adobe Marketing Cloud\footnote{http://www.adobe.com/marketing-cloud.html} and Yandex. We thoroughly compare the following approaches:
\begin{itemize}
\item The sampling method (\approach{Sampling}) in Section \ref{sec:sampling}.
\item The 2-pass algorithms (\approach{TwoPassAlg}) described in Section \ref{sec:near-independence} and \ref{sec:naive-bayes} depending on the context of the experiment.
\item  The Count-Min sketch heuristic (\approach{Heuristic}): this heuristic uses Count-Min sketch's point query estimation (see \cite{CormodeM04}) to estimate the frequencies given by the near-independence formula (instead of making a second pass through the stream to compute their exact values). We note that this approach has no theoretical guarantee. 
\end{itemize}
We highlight the main differences between the theoretical algorithms in Sections \ref{sec:near-independence} and \ref{sec:naive-bayes} and the actual implementation:
\begin{itemize}
\item Instead of running our algorithms with the theoretical memory bounds, we run and compare them for different memory limits. This approach is more practical and natural from the implementation perspective.
\item In theory, \approach{Sampling} and \approach{TwoPassAlg} use a fixed threshold $\gamma^* = \gamma/2$ to decide between outputting YES or NO. We however experiment with different values of $\gamma^*$ which is helpful when the memory is more limited or when the assumptions are not perfect in real data. 
\end{itemize}
The heavy hitters threshold $\gamma$ is carefully chosen so that the proportion of the number of heavy hitters to the total number joint values to be reasonably small, i.e., approximately at most $ 1\%$ in this paper. Therefore, we use different values of $\gamma$ for each dataset (see  Table \ref{tbl:para_set} for the actual parameter values).

\subsection{Synthetic dataset}
The synthetic dataset is sampled from a pre-trained Naive Bayes model that is used to estimate the probability of a page view. The model was provided by \cite{KvetonBGTMS16} and built on the same Clickstream dataset that we used in Section \ref{sect:clickstream}.
The coordinates consist of one class variable $Z$ and five feature variables $(X_1,\ldots,X_5)$ with high cardinalities. The dataset strongly follows the property that $X_1,\ldots,X_5$ are conditionally independent given $Z$. Specifically, the variables and their corresponding approximate cardinalities are: country (7), city (10,500), page name (8,500), starting page name (6,400), campaign (3,500), browser (300)
where {country} is the class variable.

\paragraph{Warm up experiment} We first evaluate \approach{Sampling} and \approach{Heuristic} on this synthetic dataset. As mentioned earlier, we compare the performance of the two approaches for each fixed memory size.\footnote{We compute the memory use by the  \approach{Sampling} as the product of dimension and the sample size. The memory used by \approach{Heuristic} is computed as the product of dimension and the Count-Min sketch's size. }
We take a subset of approximately {$135,000$} records conditioned on a fixed and most frequent value of $Z$ so that $X_1,\ldots,X_5$ are independent in this subset.
We then run experiments on three different subcubes: $\{X_1,X_2,X_3\}$, $\{X_2,X_3,X_4\}$, and $\{X_3,X_4,X_5\}$.
In this warm up experiment, the main goal is not to find the heavy hitters but to compare the accuracy of the heavy hitters frequency estimations given by \approach{Heuristic} and  \approach{Sampling}. We measure the performance via the mean square error (MSE), the {mean absolute error (MAE), and the mean absolute percentage error (MAPE)}. To do this,  the true frequencies were pre-computed. We use the frequencies of the top 10 heavy hitters in each of the above subcubes. The results (see Figure \ref{fig:ecml2016_pq}) indicate that  \approach{Heuristic} outperforms \approach{Sampling} when restricted to small memory. This warm up experiment gives evidence that knowing the underlying distribution structure helps improving small space heuristic's performance in estimating the heavy hitters frequencies. 

\begin{figure*}[h]
  \centering
  \includegraphics[width=0.85\textwidth,height=3.8cm]{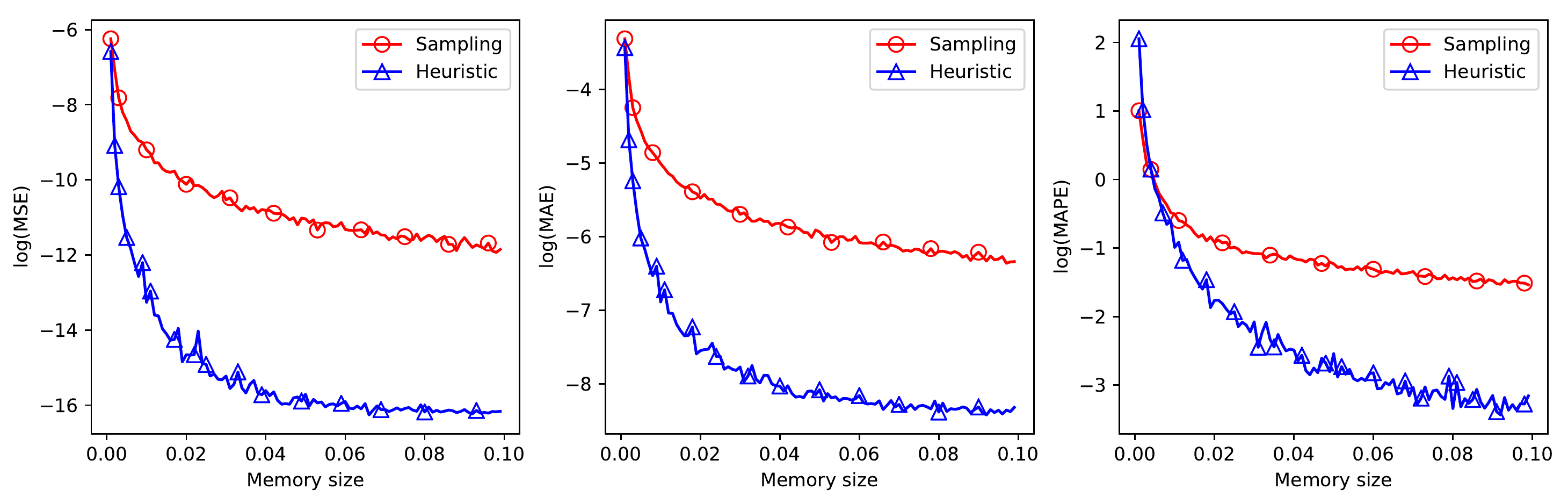}
  \caption{Warm up experiment on synthetic data. Memory size ranges from 0.1\% to 10\% of data size. We report the error as a function of memory size. }
  \label{fig:ecml2016_pq}
\end{figure*}

\paragraph{Experiment with the near-independence assumption.}
We compare performance of the three aforementioned methods on finding heavy hitters under the near-independence assumption. In this experiment, we use the same subset of data and subcubes as in the previous experiment. We fix the memory to be 2\% of data size. 

\small
\begin{table}[t]
\centering
\begin{tabular}{ c c c c c c c }
\hline
Dataset & Mem. &  \#Subcubes & {$\gamma$} & \#HH &  HH ratio \\
\hline
Synthetic (fixed $Z$) & {2}\%  & 3 & {0.002} & {29.7} & {0.079}\% \\
Synthetic (whole) & {2}\% &   3 & {0.002} & {28.7} & {0.054}\% \\
Clickstream        & {10}\% &  4 & {0.002} & {42.0} & {0.165}\% \\
Yandex  & 0.2\% &  8 & {0.1} & 2.2 & 1.65\% \\
\hline
\end{tabular}
\caption{Parameter values for each experiment. \\
(The columns correspond to memory size relative to the dataset, number of the experimented subcubes, average number of heavy hitters, average percentage of heavy hitters.)}
\label{tbl:para_set}
\end{table}
\normalsize
We measure the performance, for different values of $\gamma^*$, based on the number of true positives and false positives. 
As shown in Figure \ref{fig:ecml2016_indep}, for small memory, {both \approach{Heuristic} and \approach{TwoPassAlg} manage to find more heavy hitters than \approach{Sampling}. In terms of false positives, \approach{TwoPassAlg} beats both \approach{Heuristic} and \approach{Sampling} for smaller space. One possible explanation is that when $\gamma^*$ is small (close to $\gamma$), \approach{TwoPassAlg}, with the advantage of the second pass, accurately estimates frequencies of potential heavy hitters whereas the other two methods, especially \approach{Heuristic}, overestimate the frequencies and therefore report more false positives. For larger $\gamma^*$, false positives become less likely and all three approaches achieve similar performances.
} 
In general, \approach{TwoPassAlg} obtains the best performance as seen in the ROC curve.
\begin{figure*}[h]
  \centering
  \includegraphics[width=0.85\textwidth,height=3.8cm]{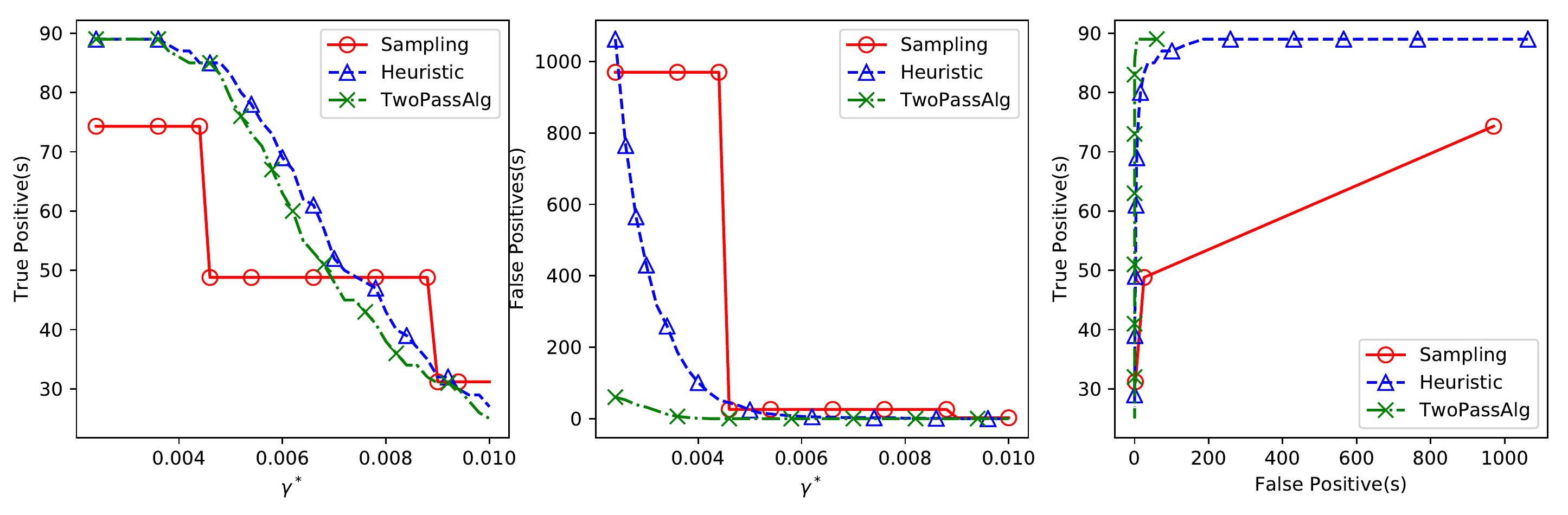}
  \caption{Near-independence experiment on synthetic dataset. We measure the performance based on the number of true and false positives (as a function of $\gamma^\ast$), and the ROC curve.}
  \label{fig:ecml2016_indep}
\end{figure*}
\paragraph{Experiment with the Naive Bayes assumption} 
We use the whole dataset of approximately {$168,000$} records without fixing $Z$ and keep other settings unchanged. We only compared the performance of \approach{TwoPassAlg} and \approach{Sampling} because the conditional probabilities cannot be directly derived from \approach{Heuristic}. In Figure \ref{fig:ecml2016_condindep}, we observe that when restricted to small memory, \approach{TwoPassAlg} attains a better performance by reporting more true heavy hitters and fewer false heavy hitters. As we allow more space, the performance of \approach{Sampling} improves as predicted by our theoretical analysis. 

\begin{figure*}[h]
  \centering
  \includegraphics[width=0.85\textwidth,height=3.8cm]{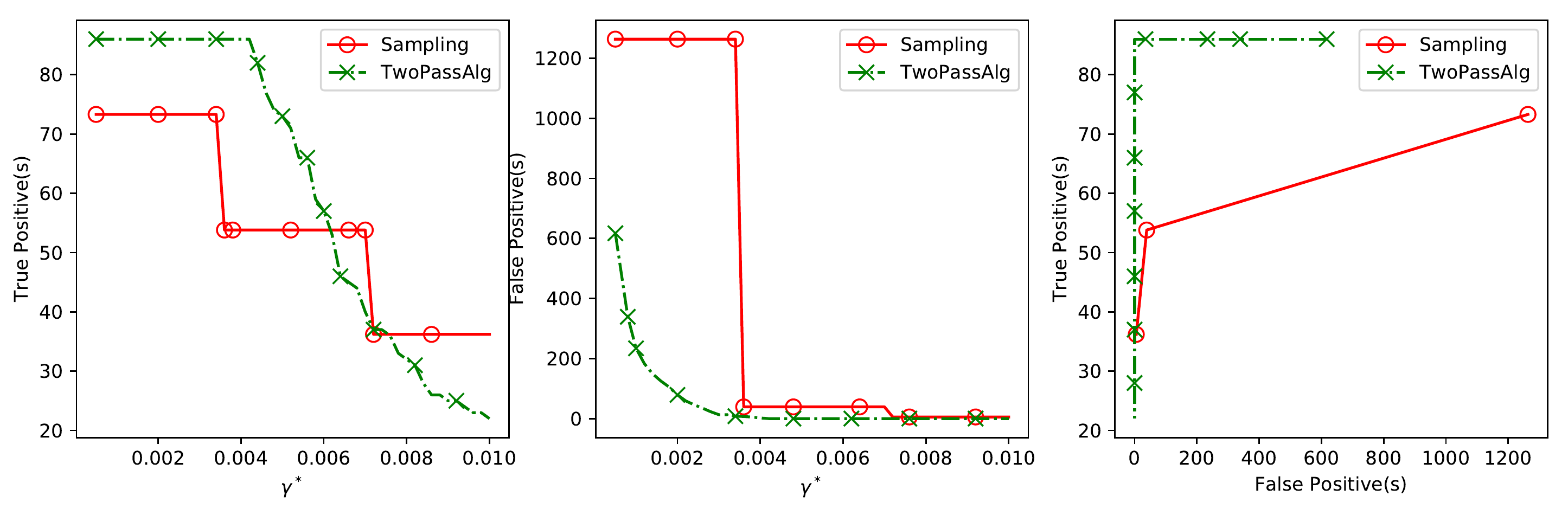}
  \caption{Naive Bayes experiment on synthetic dataset. }
 \label{fig:ecml2016_condindep}
\end{figure*}

\subsection{Clickstream dataset }\label{sect:clickstream}

To evaluate \approach{TwoPassAlg} on real data, we use an advertising dataset called {\em Clickstream Data Feeds} from Adobe Marketing Cloud. 
{The approximate dataset size is {$168,000$} and all values have been anonymized in advance.}

There are 19 high cardinality variables grouped by categories as follows: geography info (city, region, country, domain, carrier), page info ({page name, start page name, first-hit page name}), search info (visit number, referrer, campaign, keywords, search engine), external info (browser, browser width/height, plugins, language, OS).

We avoid obvious correlated features in the query subcubes, e.g., ``search engine'' and ``keywords'' are highly correlated. For example, some highly correlated variables and their correlations are: {start page name \& first-hit page name (0.67), browser \& OS (0.40), region \& country  (0.32), search engine \& country (0.27).}

We carefully select a subset of coordinates that may follow the near independence assumption to query on. For instance, we show our experiment results for the following subcubes, along with the number of heavy hitters recorded:
\{{region}, {page name} , {language}\},
\{{region}, {campaign}, {plugins}\},  
\{{carrier}, {first-hit page name}, {plugins}\},
\{{carrier}, {keywords}, {OS} \}.

Since strong independence property is not guaranteed in this real dataset, we increase memory size to 10\% of the data size in order to obtain better estimation for all methods. Recall that the memory used by \approach{Sampling} and \approach{TwoPassAlg} is partially determined by the number of dimensions and therefore it is reasonable to use a relatively larger memory size. 

In this experiment, all three algorithms are able to find most true heavy hitters (see Figure \ref{fig:market_indep}), but \approach{TwoPassAlg} returns far fewer false positives than the other two methods when $\gamma^*$ is small. In addition, \approach{TwoPassAlg} reaches zero false positive for reasonably large $\gamma^*$.
We can see in the ROC curve that \approach{TwoPassAlg} performs slightly better than \approach{Heuristic} and much better than \approach{Sampling} for small space.

\begin{figure*}[h]
  \centering
  \includegraphics[width=0.85\textwidth,height=3.8cm]{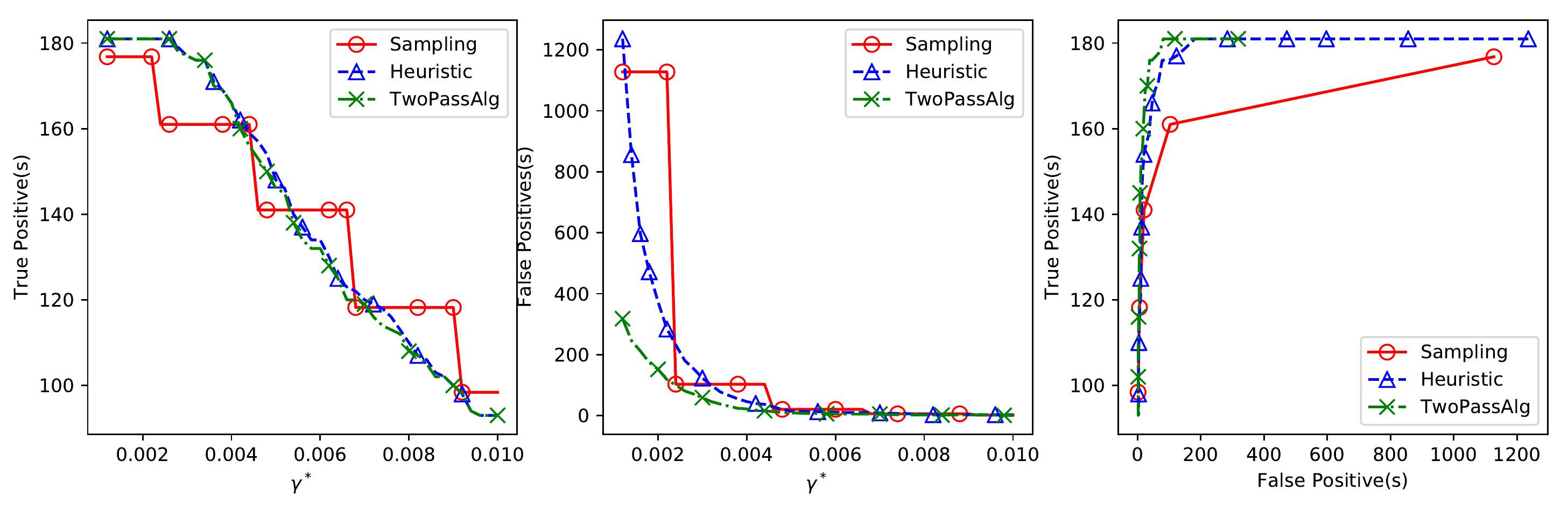}
  \caption{Near-independence experiment with Clickstream dataset. }  
  \label{fig:market_indep}
\end{figure*}

\begin{figure*}[h]
  \centering
  \includegraphics[width=0.85\textwidth,height=3.8cm]{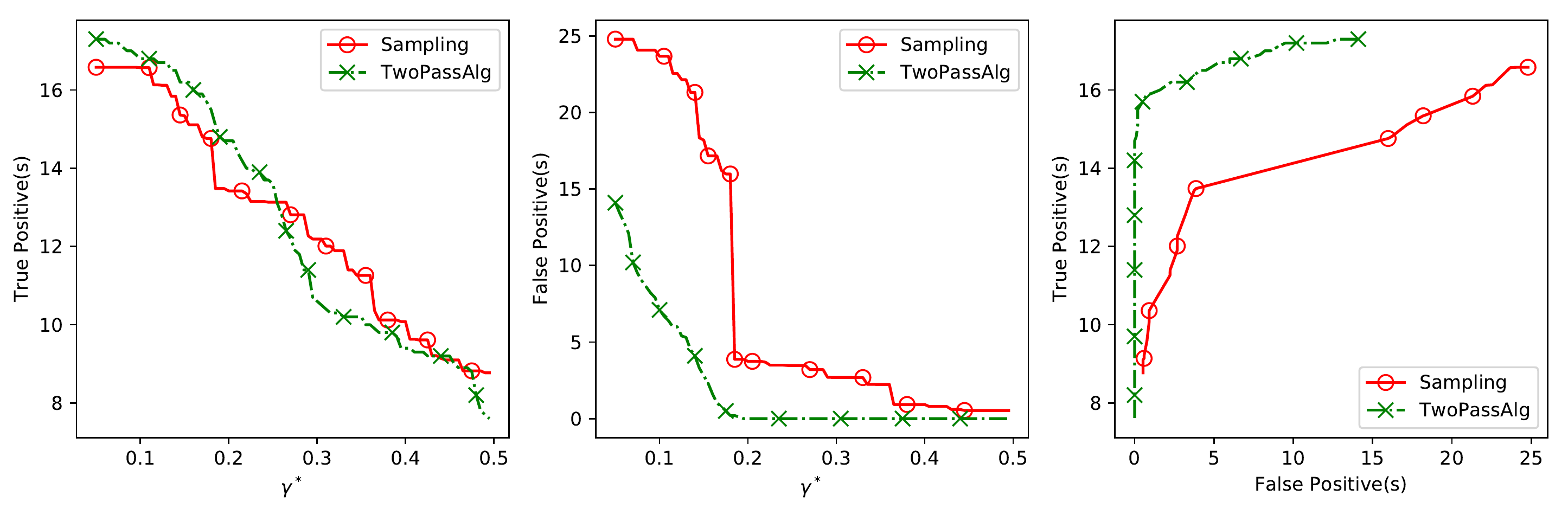}
  
  \caption{Naive Bayes experiment with Yandex dataset. }
  \label{fig:yandex_condindep}
\end{figure*}

\subsection{Yandex dataset} 
Finally, we consider the {\em Yandex dataset}  \cite{yandex} which is a web search dataset (with more than 167 millions data points). Each record in the dataset contains a query ID, the corresponding date, the list of 10 displayed items, and the corresponding click indicators of each displayed item. In the pre-processing step, {we extracted 10 subsets from the whole dataset according to top 10 frequent user queries. The sizes of subsets range from {$61,000$ to $454,000$}.}
In each subset, we treat the first 10 search results as variables $X_1,..,X_{10}$ and ``day of query'' as the latent variable $Z$. 
{
We conjecture that the search results $X_1,..,X_{10}$ are approximately independent conditioned on a given day $Z$. We observe that web patterns typically experience heavy weekly seasonality and these search results largely depend on user query time for some fixed query.
}
We proceed to evaluate \approach{TwoPassAlg} under the Naive Bayes assumption on this dataset.

We consider 8 subcubes in the form $\{X_i,X_{i+1},X_{i+2}\}$ and deliberately set a smaller memory size for this experiment because the cardinality of this dataset is relatively low. We note that different subsets of data may have different number of heavy hitters, so we take the average over 10 subsets as the final result. 

We report the results in Figure \ref{fig:yandex_condindep}. We observe that both \approach{Sampling} and  \approach{TwoPassAlg} are able to find most true heavy hitters. However, \approach{TwoPassAlg} performs significantly better in terms of false positives.
\section{Concluding Remarks}
Our work demonstrates the power of model-based approach for analyzing high dimensional data that abounds in digital analytics applications. We exhibit algorithms, with fast query time, that overcome worst case space lower bound under the classical Naive Bayes assumption. Our approach to subspace heavy hitters opens several directions for further study.  For example, 
\begin{itemize}
\item Can heavy hitters be detected efficiently under more general models?
\item Can these models be learned or fitted over data streams with polylogarithmic space? We believe this is an algorithmic problem of great interest and will have applications in machine learning beyond the context here. 
\item We assumed that we observe the latent dimension. Can this be learned from the data stream? 
\item Can the model-based approach be extended to other problems besides heavy hitters, including clustering, anomaly detection, geometric problems and others which have been studied in the streaming literature. 
\end{itemize}

\bibliographystyle{plain}
\bibliography{mybib} 

\end{document}